\documentclass[11pt,preprint]{amsart}
\usepackage{amsfonts}
\usepackage{natbib}
\usepackage{graphicx}
\usepackage{color}
\addtolength{\textheight}{2cm}
\setlength{\oddsidemargin}{.05in}
\setlength{\evensidemargin}{.05in}
\setlength{\textwidth}{6.5in}
\setlength{\topmargin}{0in}

% IWH3245
\usepackage{amssymb}
\usepackage{amsfonts}
\usepackage{amsthm}

\usepackage{hyperref}
\hypersetup{
    colorlinks=true,
    linkcolor=blue,
    filecolor=magenta,      
    urlcolor=cyan,
    citecolor=blue,
}

\newtheorem{theorem}{Theorem}[section]

\newtheorem{proposition}[theorem]{Proposition}

\newtheorem{example}[theorem]{Example}

\def\co{\mbox{co}}
\def\R{\mathbb R}

\newcommand{\bone}{\mathbf{1}}

\newcommand{\E}{\mathbb{E}}
\newcommand{\Prob}{\mathbb{P}}

\begin{document}
\bibliographystyle{plainnat}

\title[Allee effects in noisy environment]{A dynamical trichotomy for structured populations experiencing positive density-dependence in stochastic environments}
\author[S.J. Schreiber]{Sebastian J. Schreiber}
\email{sschreiber@ucdavis.edu}
\address{Department of Evolution and Ecology, One Shields Avenue, University of California, Davis, CA 95616 USA}

\begin{abstract} Positive density-dependence occurs when individuals experience increased survivorship, growth, or reproduction with increased population densities. Mechanisms leading to these positive relationships include mate limitation, saturating predation risk, and cooperative breeding and foraging. Individuals within these populations may differ in age, size, or geographic location and thereby structure these populations. Here, I study structured population models accounting for positive density-dependence and environmental stochasticity i.e. random fluctuations in the demographic rates of the population. Under an accessibility  assumption (roughly, stochastic fluctuations can lead to populations getting small and large), these models are shown to exhibit a dynamical trichotomy: (i) for all initial conditions, the population goes asymptotically extinct with probability one, (ii) for all positive initial conditions, the population persists and asymptotically exhibits unbounded growth, and (iii) for all positive initial conditions, there is a positive probability of asymptotic extinction and a complementary positive probability of unbounded growth. The main results are illustrated with applications to spatially structured populations with an Allee effect and age-structured populations experiencing mate limitation. 
\end{abstract}

\maketitle

\section{Introduction}

Higher population densities can increase the chance of mating success, reduce the risk of predation, and increase the frequency of cooperative behavior~\citep{courchamp2008allee}. Hence, survivorship, growth, and reproductive rates of individuals can exhibit a positive relationship with density i.e. positive density-dependence. In single species models, positive density-dependence can lead to an Allee effect: the existence of a critical density below which the population tends toward extinction and above which the population persists~\citep{dennis-89,mccarthy-97,scheuring-99,gascoigne-lipcius-04,tpb-03}. Consequently, the importance of Allee effects have been widely recognized for conservation of at risk populations and the management of invasive species~\citep{courchamp2008allee}. Population experiencing environmental stochasticity and a strong Allee effect are widely believed to be especially vulnerable to extinction as the fluctuations may drive their densities below the critical threshold~\citep{courchamp-etal-99}. When population densities lie above the critical threshold for the unperturbed system,  analyses and simulations of stochastic models support this conclusion~\citep{dennis-89,dennis-02,liebhold-bascompte-03,jbd-14,dennis-etal-15,assas-etal-16}. However, these studies also show that  when population densities lie below the critical threshold, stochastic fluctuations can rescue the population from the deterministic vortex of extinction. 

Individuals within population often differ in diversity of attributes including age, size, gender, and geographic location~\citep{caswell-01}. Positive density-dependence may differentially impact individuals in populations structured by these attributes~\citep{gascoigne-lipcius-05,courchamp2008allee}. This positive density-dependence can lead to an Allee threshold surface (usually a co-dimension one stable manifold of an unstable equilibrium) that separates population states that lead to extinction from those that lead to persistence~\citep{pams-04}. 

While several studies have examined how environmental stochasticity and population structure interact to influence persistence of populations experiencing negative-density dependence~\citep{hardin-etal-88a,tpb-09,jmb-14,hening-etal-16}, I know of no studies that examine this issue for populations experiencing positive density-dependence. To address this gap, this paper examines stochastic, single species models of the form 
\begin{equation}\label{eq:model}
X_{t+1}= A(X_t, \xi_{t+1}) X_t
\end{equation}
where $X_t=(X_{1,t},X_{2,t},\dots,X_{n,t})\in [0,\infty)^n$ is a column-vector of population densities, $A(X_t,\xi_{t+1})$ is a $n\times n$ non-negative matrix that determines the population densities in the next year as a function of the current densities $X_t$ and the environmental state $\xi_{t+1}$ over the time interval $[t,t+1)$. To focus on the effects of positive density-dependence, I assume that the entries of $A$ are non-decreasing functions of the population densities. Under additional suitable assumptions described in Sections 2 and 3, this paper shows that there is a dynamical trichotomy for \eqref{eq:model}: (i) asymptotic extinction occurs with probability one for all initial conditions, (ii) long-term persistence occurs with probability one for all positive initial conditions, or (iii) long-term persistence  and asymptotic extinction occur with complementary positive probabilities for all positive initial conditions. The model assumptions and definitions are presented in Section 2. Exemplar models of spatially-structured populations and age-structured populations are also presented in this section. The main results and applications to the exemplar models appear in Sections 3 and 4. Proofs of the main results are relegated to Section 5.

\section{Models, assumptions, and definitions}
Throughout this paper, I consider stochastic difference equations of the form given by equation~\eqref{eq:model}. The state space for these equations is the non-negative cone $C=[0,\infty)^n$. Define the standard ordering on this cone by $x\ge y$ for $x,y\in C$ if $x_i\ge y_i$ for all $i$. Furthermore, $x>y$ if $x\ge y$ but $x\neq y$ and $x\gg y$ if $x_i>y_i$ for all $i$. Throughout, I will use $\|x\|=\max_i|x_i|$ to denote the sup norm and $\|A\|=\max_{\|x\|=1} \|Ax\|$ to denote the associated operator norm. Define the co-norm of a matrix $A$ by 
$\co(A)=\min_{\|x\|=1}\|Ax\|$. The co-norm is the minimal amount that the matrix $A$ stretches a vector. Define $\log^+x=\max\{\log x, 0\}$ to be the non-negative component of $\log x.$

For \eqref{eq:model}, there are \emph{five standing assumptions} 
\begin{description}
\item [A1.Uncorrelated environmental fluctuations]  $\{\xi_t\}_{t=0}^\infty$ is a sequence of independent and identically distributed (i.i.d) random variables taking values in a separable metric space $E$ (such as $\R^k$).
\item [A2.Feedbacks depend continuously on population and environmental state] the entries of the matrix function $A_{ij}:C\times E \to [0,\infty)$ are continuous functions of population state $x$ and the environmental state $\xi$.
\item [A3.The population only experiences positive feedbacks] For all $i,j$ and $\xi\in E$, $A_{ij}(x,\xi)\ge A_{ij}(y,\xi)$ whenever $x\ge y.$
\item [A4.Primitivity] There exists $\tau\ge 0$ such that $A(x,\xi)^\tau\gg 0$ for all $x\gg0$ and $\xi\in E$. 
\item [A5.Finite logarithmic moments]  For all $c\ge 0$, $\E[\log^+\|A(c\bone,\xi_t)\|]<\infty$. There exists $c^*>0$ such that $\E[\log^+ (1/\co\left(\prod_{t=1}^\tau A(c\bone,\xi_t)\right))]<\infty$ for all $c\ge c^*$.
\end{description}

The first assumption implies that $(X_t)_{t\ge 0}$ is a Markov chain on $C$ and the second assumption ensures this stochastic process is Feller. The third assumption is consistent with the intent of understanding how non-negative feedbacks, in and of themselves, influence structured population dynamics. An important implication of this assumption is that the system is monotone i.e. if $X_0> \widetilde X_0>0$, then $X_t \ge \widetilde X_t$ for all $t\ge \tau$ where $X_t,\widetilde X_t$ are solutions to \eqref{eq:model} with initial conditions $X_0$ and $\widetilde X_0$, respectively. The fourth assumption ensures that all states in the population contribute to all other population states after $\tau$ time steps. The final assumption is meet for most models and ensures that \citet{kingman-73}'s subadditive ergodic theorem and the random Perron-Frobenius theorem of \citet{arnold-etal-94} are applicable. 

To see that these assumption include  models of biological interest, here are a few examples. 

\subsection*{Example 1 (Scalar models)} Considered an unstructured population with $n=1$ in which case $x\in [0,\infty)$. To model mate limitation, \citet{mccarthy-97} considered a model where $x$ corresponds to the density of females and, with the assumption of a 1:1 sex ratio, also equals the density of males. The probability of a female successfully mating is given by $ax/(1+ax)$ where $x$ is the male density and $a>0$ determines how effectively individuals find mates. If a mated individual produces on average $\xi$ daughters, then the population density in the next year is $\xi ax^2/(1+sx)$. If we allow $\xi$ to be stochastic, then \eqref{eq:model} is determined by $A(x,\xi)=\xi a x^2/(1+a x)$. Allowing the $\xi_t$ to be a log-normal would satisfy assumptions \textbf{A1}--\textbf{A5}. 

To model predator saturation~\citep{tpb-03}, let $\exp(-M/(1+hx))$ be the probability that individual escapes predation from a predator population with an ``effective'' attack rate of $M$ and handling time $h$. If $\xi$ is the number of offspring produced by an individual which escaped predation, then the population density in the next year is $\xi x \exp(-M/(1+hx))$.  Letting $\xi$ be stochastic yields $A(x,\xi)=\xi x \exp(-M/(1+h x))$. Allowing the $\xi_t$ to be a log-normal would satisfy assumptions \textbf{A1}--\textbf{A5}. 

Finally, \citet{liebhold-bascompte-03} used a more phenomenological model of the form $A(x,\xi)=\exp(x-C+\xi)$ where $C$ is the critical threshold in the absence of stochasticity and $\xi$ are normally distributed with mean zero. This model also satisfies all of the assumptions. 

We can use these scalar models, which were studied by \citep{jbd-14}, to build structured models as the next two examples illustate.  

\subsection*{Example 2 (Spatial models)} Consider a population that live in $n$ distinct patches. $x_i$ is the population density in patch $i$. Let $C_i>0$ be the critical threshold in patch $i$ and $\xi_i$ be the environmental state in patch $i$. Let $d_{ij}$ be the fraction of individuals dispersing from patch $j$ to patch $i$, and $D=(d_{ij})$ be the corresponding dispesal matrix. Then the spatial model is
\begin{equation}\label{eq:spatial}
A(x,\xi)=D \mbox{diag}(\exp(x_1-C_i+\xi_1),\exp(x_2-C_2+\xi_2),\dots, \exp(x_n-C_n+\xi_n))
\end{equation}
where $\mbox{diag}$ denotes a diagonal matrix with the indicated diagonal elements. If $D$ is a primitive matrix and the $\xi_{t}=(\xi_{1,t},\dots,\xi_{n,t})$ are a multivariate normals with zero means, then this model satisfies the assumptions.

\subsection*{Example 3 (Age-structured models)} Consider a population with $n$ age classes and $x_i$ is the density of age $i$ individuals. Assume that final $\ell$ age classes reproduce i.e. ages $n-\ell+1,n-\ell+2,\dots,n$ reproduce. If mate limitation causes positive density dependence (see Example 1) and reproductively mature individuals mate randomly, then  the fecundity of individuals in age class $n-\ell+i$ equals $f_i(x,\xi)=\xi_i \sum_{j=n-\ell+1}^n x_j/(1+\sum_{j=n-\ell+1}^n x_j)$ where $\xi_i$ is the maximal fecundity of individuals of age $i$. Let $s_i$ be the probability an individual survives from age $i-1$ to age $i$. This yields the followng nonlinear Leslie matix model
\begin{equation}\label{eq:age}
A(x,\xi)=\begin{pmatrix}0&\dots&0&f_{1}(x,\xi)&\dots & f_\ell(x,\xi)\\
s_2&0&0&\dots& 0&0 \\
0&s_3&0&\dots&0& 0 \\
\vdots&\vdots&\vdots&\vdots&\vdots&\vdots\\
0&0&\dots&0&s_n& 0 \\
\end{pmatrix}.
\end{equation}
If $\ell\ge 2$ and $\xi_t=(\xi_{1,t},\dots,\xi_{n,t})$ are multivariate log-normals, then this model satisfies the assumptions \textbf{A1}--\textbf{A5}.

\section{Main results}

To state the main results, consider the linearization of \eqref{eq:model} at the origin and near infinity. At the origin, the linearized dynamics are given by $X_{t+1}=A(0,\xi_{t+1})X_t$. Hence, the rate at which the population grows at low density is approximately given by the rate at which the random product of matrices, $A(0,\xi_t)\dots A(0,\xi_1)$, grows. \citet{kingman-73}'s subadditive ergodic theorem implies there exists $r_0$ (possibly $-\infty$) such that 
\[
\lim_{t\to\infty}\frac{1}{t}\log \|A(0,\xi_t)\dots A(0,\xi_1)\| =r_0 \mbox{ with probability one.}
\]
To characterize population growth near infinity, for all $c>0$ the subadditive ergodic theorem implies there exists an $r_c$ such that 
\[
\lim_{t\to\infty}\frac{1}{t}\log \|A(c\bone,\xi_t)\dots A(c\bone,\xi_1)\| =r_c \mbox{ with probability one}
\]
where $\bone=(1,1,\dots,1)$ is the vector of ones. Due to our assumption that the entries of $A(x,\xi)$ are non-decreasing with respect the entries of $x$, $r_c$ is non-decreasing with respect to $c$. Hence, the following limit exists (possibly $+\infty$)
\[
r_\infty = \lim_{c\to\infty} r_c.
\]

With these definitions and assumptions, the following theorem is proven in Section 5. 

\begin{theorem}\label{thm:local} 
\begin{description}
\item[Unconditional persistence] If $r_0>0$, then 
\[
\lim_{t\to\infty} \|X_{t}\|=\infty \mbox{ with probability one whenever $X_0\gg 0$.}
\]
\item[Unconditional extinction]
If $r_\infty<0$, then 
\[
\lim_{t\to\infty} X_{t}=0 \mbox{ with probability one.}
\]
\item[Conditional persistence and extinction]
If $r_0<0<r_\infty$, then for all $\varepsilon>0$ there exist $c^*>c_*>0$ such that 
\[
\Prob\left[\lim_{t\to\infty} X_t=0\Big|X_0=x \right]\ge 1-\varepsilon \mbox{ whenever }x\le c_*\bone
\]
and
\[
\Prob\left[\lim_{t\to\infty}\| X_t\|=\infty\Big|X_0=x \right]\ge 1-\varepsilon \mbox{ whenever }x\ge c^*\bone.
\]
\end{description}
\end{theorem}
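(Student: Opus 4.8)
The plan is to read off all three conclusions from the monotonicity built into \textbf{A3} --- which lets one sandwich the nonlinear process between the linear cocycles obtained by freezing the density at $\mathbf 0$ or at $c\bone$ --- together with Kingman's subadditive ergodic theorem and the random Perron--Frobenius theorem of Arnold et al.\ (both applicable under \textbf{A1,A2,A4,A5}). Write $\Phi_c(t)=A(c\bone,\xi_t)\cdots A(c\bone,\xi_1)$ for the level-$c$ cocycle, and recall $\|M\bone\|=\|M\|$ in the sup norm for a nonnegative matrix $M$. For \emph{unconditional persistence} this is immediate: \textbf{A3} gives $X_{t+1}=A(X_t,\xi_{t+1})X_t\ge A(0,\xi_{t+1})X_t$, so $X_t\ge A(0,\xi_t)\cdots A(0,\xi_1)X_0\ge\delta A(0,\xi_t)\cdots A(0,\xi_1)\bone$ when $X_0\ge\delta\bone\gg0$, whence $\|X_t\|\ge\delta\|A(0,\xi_t)\cdots A(0,\xi_1)\|$, whose $t$th root tends to $e^{r_0}>1$.

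For \emph{conditional persistence and extinction}, fix $\varepsilon>0$; since $r_c\downarrow r_0<0$ as $c\downarrow0$ (continuity, \textbf{A2}) and $r_c\uparrow r_\infty>0$ as $c\to\infty$ (by definition), pick a small $c_1$ with $r_{c_1}<0$ and a large $c_2$ with $r_{c_2}>0$. By induction and \textbf{A3}, $X_t\le\Phi_{c_1}(t)X_0$ as long as $\|X_s\|\le c_1$ for $s<t$, and $X_t\ge\Phi_{c_2}(t)X_0$ as long as $X_s\ge c_2\bone$ for $s<t$. Hence the extinction time $\tau_1=\inf\{t:\|X_t\|>c_1\}$ satisfies $\{\tau_1<\infty\}\subseteq\{\|X_0\|R_1>c_1\}$ with $R_1:=\sup_t\|\Phi_{c_1}(t)\|<\infty$ a.s.\ (because $r_{c_1}<0$), while for $x\ge c^*\bone$ the escape time $\tau_2=\inf\{t:X_t\not\ge c_2\bone\}$ satisfies $\{\tau_2<\infty\}\subseteq\{\mu<c_2/c^*\}$ with $\mu:=\inf_t\min_i(\Phi_{c_2}(t)\bone)_i>0$ a.s.\ (because, by primitivity, $\Phi_{c_2}(t)\bone$ is coordinatewise positive for every $t$ and, since $r_{c_2}>0$, grows coordinatewise to $\infty$). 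Therefore $\Prob[\tau_1<\infty\mid X_0=x]\le\Prob[R_1>c_1/\|x\|]\le\varepsilon$ once $\|x\|\le c_*$ with $c_*$ small, and $\Prob[\tau_2<\infty\mid X_0=x]\le\Prob[\mu<c_2/c^*]\le\varepsilon$ once $c^*$ is large; on $\{\tau_1=\infty\}$, $0\le X_t\le\Phi_{c_1}(t)x\to0$, and on $\{\tau_2=\infty\}$, $\|X_t\|\ge c^*\|\Phi_{c_2}(t)\|\to\infty$. Choosing the final constants with $c^*>c_*$ completes this case.

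For \emph{unconditional extinction}, comparison with $\Phi_0$ near $\mathbf 0$ is not enough, since the nonlinear map is super-linear at high densities; the plan is instead to dominate by the envelope cocycle $\bar A(\xi):=\lim_{c\to\infty}A(c\bone,\xi)=\sup_c A(c\bone,\xi)$. Since $A(x,\xi)\le\bar A(\xi)$ for every $x$, we get $\|X_t\|\le\|\bar A(\xi_t)\cdots\bar A(\xi_1)\|\,\|X_0\|$, so it suffices to show $\bar A$ has a.s.\ finite entries, $\E[\log^+\|\bar A(\xi)\|]<\infty$, and top Lyapunov exponent equal to $r_\infty<0$; Furstenberg--Kesten then gives $\|X_t\|\to0$ a.s. Equivalently, staying with finite levels: $R_K:=\sup_t\|\Phi_K(t)\|$ is a.s.\ finite for each $K$ (as $r_K<0$) and $R_K\uparrow\bar R:=\sup_t\|\bar A(\xi_t)\cdots\bar A(\xi_1)\|$, so choosing a random $K$ with $K\ge\bar R\|X_0\|$, an induction on $t$ shows $X_s$ never leaves $[\mathbf 0,K\bone]$, whence $X_t\le\Phi_K(t)X_0\to0$.

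The main obstacle is this last reduction: one must show that $r_\infty<0$ propagates to the envelope. If some entry of $\bar A$ failed to be a.s.\ finite, or if $\bar A$ had a Lyapunov exponent exceeding $r_\infty$, then by primitivity (\textbf{A4}) that entry or growth sits on a positive cycle of length $\le\tau$ whose remaining factors are bounded below in probability (the co-norm condition of \textbf{A5}), which would force $r_c\to+\infty$ --- a contradiction; a monotone-convergence argument then identifies the exponent with $r_\infty$. The residual points are routine: controlling $\log^-$ terms when interchanging the limits in $c$ and in $t$, and, in the persistence half above, inserting the primitivity time $\tau$ to guarantee $\Phi_{c_2}(t)\bone\gg0$ even when some $A(c_2\bone,\xi)$ has a null row.
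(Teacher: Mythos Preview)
Your treatment of \emph{unconditional persistence} and of the \emph{conditional} case matches the paper's proof essentially line for line: bound below by the cocycle $\Phi_0$ at the origin; for the conditional statements, freeze at a level $c$ with $r_c<0$ (resp.\ $r_c>0$), set $R=\sup_t\|\Phi_c(t)\|$ (resp.\ $R_\infty=\inf_{t,i}e_i^T\Phi_c(t)c\bone$), and run the induction that traps $X_t$ in $[0,c\bone]$ (resp.\ above $c\bone$) on the high-probability event $\{R\le k\}$ (resp.\ $\{R_\infty>1/k\}$). One cosmetic difference: you obtain a small level with $r_{c_1}<0$ by invoking continuity $r_c\downarrow r_0$, whereas the paper simply observes that $A(c\bone,\xi)\le e^{-r_0/2}A(0,\xi)$ for $c$ small (by \textbf{A2}), which gives $r_c\le r_0/2<0$ without appealing to continuity of the Lyapunov exponent.

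For \emph{unconditional extinction} you take a genuinely different route. The paper's proof is three lines: pick any $c$ with $c\bone\ge x_0$ (automatically $r_c\le r_\infty<0$), and compare
\[
\limsup_{t\to\infty}\tfrac{1}{t}\log\|X_t\|\ \le\ \limsup_{t\to\infty}\tfrac{1}{t}\log\|\Phi_c(t)x_0\|\ \le\ r_\infty/2<0.
\]
You instead introduce the envelope $\bar A(\xi)=\sup_cA(c\bone,\xi)$ and argue that $r_\infty<0$ forces $\bar A$ to be a.s.\ finite with Lyapunov exponent $r_\infty$, so that $X_t\le\bar A(\xi_t)\cdots\bar A(\xi_1)x_0\to 0$. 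The paper's direct comparison is much shorter, but note that its first inequality tacitly requires $X_s\le c\bone$ for all $s<t$, which is \emph{not} justified there (the very first step $X_1=A(x_0,\xi_1)x_0$ can already overshoot $c\bone$); the paper does not redo the induction it performs so carefully in the conditional case. Your envelope approach, once completed, is precisely what closes this gap. The points you label ``routine'' are real work, however: showing that an unbounded entry of $\bar A$ would force $r_c\to\infty$ needs primitivity plus the co-norm half of \textbf{A5} in an essential way, and identifying the exponent of $\bar A$ with $r_\infty$ is an interchange $\sup_c\inf_t=\inf_t\sup_c$ of the subadditive limit that does not come for free.
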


To get statements about all initial conditions with probability one in the final case, an assumption that ensures that the environmental stochasticity can drive the population to low or high densities is needed. Define $\{0,\infty\}$ to be \emph{accessible} if for all $c>0$ there exists $\gamma>0$ such that
\[
\Prob\left[\left\{\mbox{there is }t\ge0 \mbox{ such that }X_t\gg c\bone \mbox{ or } X_t \ll \bone/c\right\}\Big|X_0=x\right]\ge \gamma
\]
for all $x\gg 0.$ All of the examples in Section 2 satisfy this accessibility condition. 

\begin{theorem}\label{thm:global}
If $r_0<0<r_\infty$ and $\{0,\infty\}$ is accessible, then
\[
\Prob\left[\lim_{t\to\infty}\|X_{t}\|=\infty \mbox{ or }
\lim_{t\to\infty}X_t=0\Big|X_0=x \right]= 1.\]
\end{theorem}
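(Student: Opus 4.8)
The plan is to deduce Theorem~\ref{thm:global} from the conditional bounds of Theorem~\ref{thm:local} together with accessibility, by a martingale argument of the type standard in stochastic persistence theory. Let $G$ denote the event $\{\lim_{t\to\infty}X_t=0\}\cup\{\lim_{t\to\infty}\|X_t\|=\infty\}$, whose probability we must show is one; it suffices to take $X_0=x\gg 0$, the boundary case being a routine variant. First I would record that the process stays in the interior: since \textbf{A4} forces $A(x,\xi)^\tau\gg 0$ for $x\gg 0$, the matrix $A(x,\xi)$ has no zero row when $x\gg0$, and hence $A(x,\xi)y\gg0$ whenever $x\gg0$ and $y\gg0$; by induction, $X_0\gg0$ implies $X_t\gg0$ for all $t\ge0$ almost surely.

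The heart of the argument is a lower bound on $p(x):=\Prob_x[G]$ that is uniform over $x\gg0$. To obtain it, fix $\varepsilon=\tfrac12$ and let $c_*<c^*$ be the thresholds supplied by the conditional case of Theorem~\ref{thm:local}, so that $\Prob_y[\lim_t X_t=0]\ge\tfrac12$ whenever $y\le c_*\bone$ and $\Prob_y[\lim_t\|X_t\|=\infty]\ge\tfrac12$ whenever $y\ge c^*\bone$. Applying accessibility of $\{0,\infty\}$ with $c=\max\{c^*,1/c_*\}$ produces $\gamma>0$ with $\Prob_x[\sigma<\infty]\ge\gamma$ for every $x\gg0$, where $\sigma=\inf\{t\ge0:X_t\gg c\bone\text{ or }X_t\ll\bone/c\}$. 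On $\{\sigma<\infty\}$ the state $X_\sigma$ satisfies $X_\sigma\ge c^*\bone$ or $X_\sigma\le c_*\bone$, so $\Prob_{X_\sigma}[G]\ge\tfrac12$ in either case. As $G$ is a tail event --- it depends only on the tail of $(X_t)_{t\ge0}$, so $\theta_\sigma^{-1}G=G$ on $\{\sigma<\infty\}$ --- the strong Markov property yields
\[
p(x)=\Prob_x[G]\ge\E_x\!\left[\bone_{\{\sigma<\infty\}}\,\Prob_{X_\sigma}[G]\right]\ge\tfrac12\,\Prob_x[\sigma<\infty]\ge\tfrac{\gamma}{2}=:\delta>0
\]
for all $x\gg0$.

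Then I would close by martingale convergence. Because $(X_t)$ is a time-homogeneous Markov chain (by \textbf{A1}) and $G$ is shift-invariant, $\Prob[G\mid\F_t]=p(X_t)$; this is a bounded $(\F_t)$-martingale, hence it converges almost surely to $\Prob[G\mid\F_\infty]=\bone_G$. Since $X_t\gg0$ for every $t$, the uniform bound of the previous step gives $p(X_t)\ge\delta>0$ for all $t$, and letting $t\to\infty$ forces $\bone_G\ge\delta>0$ almost surely, i.e.\ $\bone_G=1$ almost surely. Hence $\Prob_x[G]=1$, which is the assertion of the theorem.

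The main obstacle is the uniform-in-$x$ lower bound on $p(x)$: the conclusions of Theorem~\ref{thm:local} hold only near $0$ and near $\infty$, and it is precisely the accessibility hypothesis that lets one funnel every interior trajectory, with probability bounded away from $0$, into one of those two regions and then invoke the local estimate through the strong Markov property. Once this is in place, the tail-event/martingale step is routine, and the interior-invariance observation is just the minor technicality needed to keep the martingale bounded away from $0$ along the whole path.
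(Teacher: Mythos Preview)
Your proposal is correct and follows essentially the same route as the paper: both arguments combine the conditional bounds of Theorem~\ref{thm:local} with accessibility and the strong Markov property to obtain a uniform positive lower bound on $\Prob_x[G]$ over $x\gg0$, and then invoke the L\'evy zero--one law (equivalently, martingale convergence of $\Prob[G\mid\F_t]=p(X_t)$) to upgrade this to probability one. Your explicit interior-invariance check ($X_0\gg0\Rightarrow X_t\gg0$ for all $t$) is a useful technical addition that the paper's proof uses implicitly when applying the uniform bound to $\Prob_{X_t}[\mathcal{C}]$.
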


Proofs of both theorems are presented in Section 5. The scalar version of these theorems were proven in Theorem 3.2 of \citep{jbd-14}. 

\section{Applications}

To illustrate the applicability of the two theorems, we consider the spatial structured and age structured models introduced in section 3. 

\subsection*{Example 2 (spatially structured populations) revisited} Consider the spatial structured model described in Example 2 and characterized by \eqref{eq:spatial}. For this model, 
\[
A(c\bone,\xi)=D \mbox{diag}(\exp(-C_1+\xi_1),\exp(-C_2+\xi_2),\dots, \exp(-C_n+\xi_n))\exp(c).
\]
For simplicity, let us assume that the fraction of individuals dispersing is $d$ and dispersing individuals land with equal likelihood on any patch (including the possibility of returning to its original patch). Then 
$D=(d_{ij})$ is given by $d_{ij}=d/n$ for $i\neq j$ and $d_{ii}=(1-d)+d/n$. Assume that $d\in (0,1].$

I claim that $r_\infty=\infty$. Indeed, let $b=\max\{1-d,d/k\}>0.$ Then $D\ge b\mbox{Id}$ where $\mbox{Id}$ denotes the identity matrix and 
\begin{eqnarray*}
\E[\log \| \prod_{s=1}^t A(c\bone, \xi_s)\|]&\ge&
\E[\log\| \prod_{s=1}^t b \mbox{diag}(\exp(-C_1+\xi_{1,s}),\exp(-C_2+\xi_{2,s}),\dots, \exp(-C_n+\xi_{n,s}))\exp(c) \| ]\\
&\ge& \E[\log\| \prod_{s=1}^t \mbox{diag}(\exp(\xi_{1,s}),\exp(\xi_{2,s}),\dots, \exp(\xi_{n,s})) \| ]+t( c+ \log b-  \max_i C_i)\\
&=&  \E[\max_i \sum_{s=1}^t \xi_{i,s} ]+t( c+ \log b-  \max_i C_i)\\
&\ge& t\left(\E[\xi_{1,1}]+c+\log b-  \max_i C_i\right).
\end{eqnarray*}
Dividing by $t$ and taking the limit as $t\to\infty$, this inequality implies that $r_c\ge \E[\xi_{1,1}]+c+\log b-\max_i C_i$. Hence, $r_\infty=\lim_{c\to\infty} r_c = \infty$ as claimed. Theorem~\ref{thm:global} implies that for all $x\gg 0$, $X_t\to\infty$ with positive probability whenever $X_t=x.$

Understanding $r_0$ is more challenging. However, Proposition 3 of \citet{tpb-09} implies that $r_0$ varies continuously as a function of $d$. In the limit of $d=0$, $D=\mbox{Id}$ and $r_0=\max_i \E[\xi_{i,1}]$. Hence, for populations where $d\approx 0$ but $d>0$, there are two types of dynamics. If $\E[\xi_{i,1}]<0$ for all patches (i.e. populations are unable to persist in each patch at low density), then  there is a positive probability of going either asymptotically extinct or a complementary positive probability of persistence. Alternatively, if $\E[\xi_{i,1}]>0$ for at least one patch, then the population persists with probability one whenever $X_0\gg 0$. 

Now consider the case that all individuals disperse i.e.  $d=1$. Then $r_0=\E[\log \frac{1}{n} \sum_i \exp(\xi_{i,1})]$ i.e. $e^{r_0}$ is the geometric mean of the spatial average of the $\exp(\xi_{i,1})$. By Jensen's inequality, $r_0$ when $d\approx 1$ is greater than $r_0$ when $d\approx 0$. Hence, one can get the scenario where increasing the dispersal fraction $d$ shifts a population from experiencing asymptotic extinction with positive probability to a population that persists with probability one. This corresponds to a positive density-dependence analog of a ph,enomena observed in models with negative density-dependent feedbacks~\citep{tpb-09,hening-etal-16} and density-independent feedbacks~\citep{metz-etal-83,jansen-yoshimura-98,prsb-10,jmb-13}. However, in these models, the long-term outcome never exhibits a mixture of extinction and persistence.

\subsection*{Example 3 (age-structured populations) revisited}
Consider the age-structured model with mate-limitation in Example 3 where there are $\ell\ge 2$ reproductive stages. If $\xi_t$ are multivariate log-normals, then $\{0,\infty\}$ is accessible. Define 
\[
B=\begin{pmatrix}0&0&0&0&\dots & 0\\
s_2&0&0&\dots& 0&0 \\
0&s_3&0&\dots&0& 0 \\
\vdots&\vdots&\vdots&\vdots&\vdots&\vdots\\
0&0&\dots&0&s_n& 0 \\
\end{pmatrix}.
\]
As $0<s_i<1$ for all $i$, the dominant eigenvalue $\lambda$ of $B$ is strictly less than one. Thus,
\begin{eqnarray*}
r_0=\lim_{t\to\infty}\frac{1}{t}\E[\log \| \prod_{s=1}^t \|A(0,\xi_s) \|]&=&\lim_{t\to\infty}\frac{1}{t}\log \| B^t \|\\
&=& \log \lambda<0.
\end{eqnarray*}
As $r_0<0$, it follows that for all positive initial conditions there is a positive probability of asymptotic extinction (in contrast the spatial model which always has a positive probability of persistence and unbounded growth.) 

To say something about persistence,  assume that $\xi_{1,t},\dots,\xi_{\ell,t}$ have the same log mean $\mu$ and non-degenerate log-covariance matrix $\Sigma^2.$ Then $r_\infty$ is an increasing function of $\mu$ with $\lim_{\mu\to\infty} r_\infty=\infty$ and $\lim_{\mu\to-\infty} r_\infty<0$. Hence, there is a critical $\mu$, call it $\mu^*$, such that the population goes asymptotically extinct with probability one whenever $\mu<\mu^*$ and the population persists with positive probability whenever $\mu>\mu^*$.

\section{Proofs}

First, I prove Theorem~\ref{thm:local}. Assume $r_0>0$ and $X_0=x_0\gg0$. As the entries of $A(x,\xi)$ are non-decreasing functions of $x$, 
\begin{eqnarray*}
\liminf_{t\to\infty}\frac{1}{t}\log\|X_t\|&=& \liminf_{t\to\infty}\frac{1}{t}\log\|\prod_{s=1}^t A(X_{s-1},\xi_s)x_0 \|\\
&\ge & \liminf_{t\to\infty}\frac{1}{t}\log\|\prod_{s=1}^t A(0,\xi_s)x_0 \|\\
&=&r_0>0 \mbox{ with probability one.}\\
\end{eqnarray*}
In particular, $\lim_{t\to\infty}\|X_t\|=\infty$ with probability one as claimed.

Next, assume that $r_\infty<0$. Given any $X_0=x_0\gg0$, choose $c>0$ such that $c\mathbf{1}\ge x_0$ and 
\[
\lim_{t\to\infty}\frac{1}{t}\log\|\prod_{s=1}^t A(c\mathbf{1},\xi_{s})\|\le r_\infty/2 \mbox{ with probability one.}
\]
Then 
\begin{eqnarray*}
\limsup_{t\to\infty}\frac{1}{t}\log\|X_t\|&\le & \limsup_{t\to\infty}\frac{1}{t}\log\|\prod_{s=1}^t A(c\mathbf{1},\xi_s)x_0 \|\\
&\le & r_0/2<0 \mbox{ with probability one.}\\
\end{eqnarray*}
In particular, $\lim_{t\to\infty} X_t=0$ with probability one as claimed. 

Finally, assume that $r_\infty>0$ and $r_0<0.$ As the entries of $A$ are non-increasing in $x$, there exists $c>0$ such that $A(c\mathbf{1},\xi)\le A(0,\xi)\exp(-r_0/2)$ for $\xi\in E$. Hence, 
\begin{equation}\label{eq:ub}
\limsup_{t\to\infty} \frac{1}{t}\log \|\prod_{s=1}^tA(c\mathbf{1},\xi_s)\|
\le r_0/2 <0\mbox{ with probability one.}
\end{equation}
Define the random variable
\[
R = \sup_{t\ge 1} {\|\prod_{s=1}^t A(c\mathbf{1},\xi_s) \|}.
\]
Equation~\ref{eq:ub} implies that $R<\infty$ with probability one. For all $k>0$, define the event $\mathcal{E}_k=\{R\le k\}.$ For $x_0\le c\bone/k$ and $X_0=x_0$, I claim that $X_t\le c \mathbf{1}$ for all $t\ge 0$ on the event $\mathcal{E}_k.$ I prove this claim by induction. $X_0\le  c\bone$ by assumption. Suppose that $X_s\le c\bone$ for $0\le s\le t-1$. Then 
\begin{eqnarray*}
\|X_{t}\|&=&\|\prod_{s=1}^tA(X_{s-1},\xi_{s})x_0\|\\
&\le &\|\prod_{s=1}^t A(c\mathbf{1},\xi_{s})c\mathbf{1}/k\| \mbox{ by induction and monotonicity}\\
&\le &\|\prod_{s=1}^t A(c\mathbf{1},\xi_{s})\| c/k \le  Rc/k \mbox{ by the definition of }R\mbox{ and }x\\
&\le& c \mbox{ on the event }\mathcal{E}_k.
\end{eqnarray*}
This completes the proof of the claim that $X_t \le c \bone$ for all $t\ge 0$ on the event $\mathcal{E}_k.$ It follows that on the event $\mathcal{E}_k$ and $X_0=x\le c\mathbf{1}$ that
\begin{eqnarray*}
\limsup_{t\to\infty}\frac{1}{t}\log \| X_t\| &\le& \limsup_{t\to\infty} \frac{1}{t}\log \|\prod_{s=1}^t  A(c\mathbf{1},\xi_s)\|c\\
&\le & r_0/2<0 \mbox{ almost surely.}
\end{eqnarray*}
In particular, $\lim_{t\to\infty}X_t =0$ almost sure on the event $\mathcal{E}_k$. As the events $\mathcal{E}_k$ are increasing with $k$,  $\lim_{k\to\infty}\Prob[\mathcal{E}_k]=\Prob[\cup_k \mathcal{E}_k]=\Prob[R<\infty]=1.$ Therefore, given $\varepsilon>0$, there exists $k$ such that $\Prob[\mathcal{E}_k]>1-\varepsilon$. For this $k$, $x_0\le c \mathbf{1}/k$ and $X_0=x_0$, 
\[
\Prob[\lim_{t\to\infty}X_t=0|X_0=x_0]\ge \Prob[\mathcal{E}_k]\ge 1-\varepsilon. 
\]

To show convergence to $\infty$ with positive probability when $r_\infty>0$, choose $c\ge c^*$ sufficiently large so that 
\[
\lim_{t\to\infty}\frac{1}{t}\log \|\prod_{s=1}^t A(c\mathbf{1},\xi_s)\| \ge r_\infty/2>0 \mbox{ with probability one.}
\]
By the Random Perron-Frobenius theorem~\citep[Theorem 3.1 and Remark (ii) on pg. 878]{arnold-etal-94}, 
\begin{equation}\label{eq:c-large}
\lim_{t\to\infty}\frac{1}{t}\log \left(e_i^T\prod_{s=1}^t A(c\mathbf{1},\xi_s)e_j\right) \ge r_\infty/2>0 \mbox{ with probability one.}
\end{equation}
for all elements $e_i,e_j$ of the standard basis of $\R^n$ and where $^T$ denotes the transpose of a vector. Equation~\ref{eq:c-large} implies that all of the entries of $\prod_{s=1}^t A(c\mathbf{1},\xi_s)$ grow exponentially in time at rate greater than $r_\infty/2$ with probability one. 

Define 
\begin{eqnarray*}
%R_1&=&\min_{ 1\le i \le n}  X_{i,\tau} \mbox{ when } X_0=c\bone,\\
R_\infty&=&
\inf_{t\ge 1,1\le i\le n} e_i^T \prod_{s=1}^t A(c\mathbf{1},\xi_s)c\bone.% , \mbox{ and}\\
%R_\infty&=&\min\{R_1,R_2\}.
\end{eqnarray*}
By \eqref{eq:c-large} and the primitivity assumption \textbf{A4}, $R_\infty>0$  with probability one. Define the events
\[
\mathcal{F}_k=\{R_\infty>1/k\}\mbox{ for }k\ge 1.
\]
Now, suppose that $X_0=x_0\ge c\mathbf{1}k$. I claim that $X_{t}\ge c\mathbf{1}$ for all $t\ge 0$ on the event $\mathcal{F}_k$. $X_0\ge c\mathbf{1}$ by the choice of $x_0$. Assume that $X_s\ge c\bone$ for $0 \le s \le t-1$. Then 
\begin{eqnarray*}
X_{t} &=& \prod_{s=1}^{t} A(X_{s-1},\xi_s)x_0 \\
&\ge & \prod_{s=1}^{t} A(c\mathbf{1},\xi_s) x_0 \mbox{ by inductive hypothesis}\\
&\ge& R_\infty c\mathbf{1}k \mbox{ by definition of }R_\infty\mbox{ and }x_0\\
&\ge& c\mathbf{1}\mbox{ on the event }\mathcal{F}_k.
\end{eqnarray*}
Equation \eqref{eq:c-large} implies that on the event $\mathcal{F}_k$
\[
\liminf_{t\to\infty}\frac{1}{t}\log \| X_t\| \ge r_0/2 \mbox{ almost surely.}
\]
Hence, $\lim_{t\to\infty} \|X_t\|=\infty$ almost surely on the event $\mathcal{F}_k.$ As $\mathcal{F}_k$ are an increasing set of events, $\Prob[R_\infty>0]=\Prob[\cup_{t\ge 1}\mathcal{F}_k]=1$. For any $\varepsilon>0$ there is $k\ge 1$ such that $\Prob[\mathcal{F}_k]\ge 1-\varepsilon.$ Hence, for this $k$ and $X_0=x\ge c k \mathbf{1}$, 
\[
\Prob[\lim_{t\to\infty}\|X_t\|=\infty| X_0=x]\ge 1-\varepsilon.
\]
This completes the proof of Theorem~\ref{thm:local}.

The proof of Theorem~\ref{thm:global} follows from Theorem~\ref{thm:local} and  the following proposition. 

\begin{proposition}\label{general-global-result} Assume $\{0,\infty\}$ is accessible. Let $c>0$ and $\delta\in [0,1)$ be such that 
\[
\Prob\left[ \lim_{t\to\infty} X_t=0|X_0=x \right]\ge 1-\delta \mbox{ whenever }x\le \bone/c
\]
and 
\[
\Prob\left[ \lim_{t\to\infty} X_t=\infty|X_0=x \right]\ge 1-\delta \mbox{ whenever }x\ge c\bone.
\]
Then
\[
\Prob\left[\lim_{t\to\infty} X_t=\infty \mbox{ or } \lim_{t\to\infty} X_t =0 | X_0=x\right]=1 \mbox{ whenever }x\gg 0. 
\]
\end{proposition}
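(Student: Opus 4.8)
The plan is to use the accessibility hypothesis together with the Markov property to show that, starting from any $x\gg 0$, the process eventually (with probability one) lands in one of the two ``good'' regions $\{x\le \bone/c\}$ or $\{x\ge c\bone\}$, and that from there it converges to $0$ or $\infty$ with probability at least $1-\delta$; an iteration/Borel--Cantelli argument then boosts the probability to one. Concretely, define the stopping time
\[
\sigma=\inf\{t\ge 0:\ X_t\ll \bone/c \mbox{ or } X_t\gg c\bone\},
\]
and let $G=\{\lim_{t\to\infty}X_t=0 \mbox{ or }\lim_{t\to\infty}X_t=\infty\}$ be the event we want to have full probability. The first step is to show $\Prob[\sigma<\infty\,|\,X_0=x]=1$ for all $x\gg 0$: accessibility gives, for the fixed $c$, a uniform $\gamma>0$ so that from any $x\gg 0$ there is probability at least $\gamma$ of hitting $\{X_t\gg c\bone\}\cup\{X_t\ll\bone/c\}$ at some time; applying this repeatedly via the strong Markov property at a suitable sequence of times yields $\Prob[\sigma=\infty]\le (1-\gamma)^k\to 0$. (One must be slightly careful that accessibility is phrased with a ``there exists $t\ge 0$'' rather than a fixed horizon, so the repeated application is to the shifted chain; this is routine but worth stating.)

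The second step is the core estimate. On $\{\sigma<\infty\}$, at time $\sigma$ the state $X_\sigma$ lies in $\{x\le\bone/c\}$ or in $\{x\ge c\bone\}$ (using that strict inequalities imply the weak ones), so by the hypothesis of the Proposition and the strong Markov property,
\[
\Prob\left[G \,\middle|\, \F_\sigma\right]\ge 1-\delta \quad\mbox{on }\{\sigma<\infty\}.
\]
Combined with $\Prob[\sigma<\infty\,|\,X_0=x]=1$, this already gives $\Prob[G\,|\,X_0=x]\ge 1-\delta$ for \emph{every} $x\gg 0$ --- a uniform lower bound. Now run the argument again: on the complementary event $G^c$ (which has probability at most $\delta$), the process has \emph{not} converged, so in particular it is not eventually trapped; I claim $\{0,\infty\}$ accessibility still applies and the chain returns to the good regions infinitely often on $G^c$. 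More precisely, define $\sigma_0=\sigma$ and inductively $\sigma_{m+1}=\inf\{t>\sigma_m:\ X_t\ll\bone/c \mbox{ or }X_t\gg c\bone\}$. By the same hitting argument each $\sigma_m<\infty$ a.s., and at each $\sigma_m$ the conditional probability of $G$ is at least $1-\delta$; since these are ``fresh starts'' from the good regions, a Borel--Cantelli / Lévy $0$--$1$ law argument on the filtration $(\F_{\sigma_m})_m$ forces $\Prob[G\,|\,X_0=x]\ge 1-\delta^{m}$ for every $m$, hence $\Prob[G\,|\,X_0=x]=1$.

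I expect the main obstacle to be making the ``infinitely many fresh starts'' step fully rigorous: one needs that on $G^c$ the visits $\sigma_m$ to $\{x\le\bone/c\}\cup\{x\ge c\bone\}$ genuinely recur (not merely that $\sigma_{m+1}<\infty$ a.s.\ unconditionally, but that the conditional-on-$G^c$ behavior does not somehow escape the scope of accessibility), and then to combine the bounds $\Prob[G^c\,|\,\F_{\sigma_m}]\le\delta$ across $m$ correctly --- the cleanest route is the martingale convergence theorem applied to $M_m=\Prob[G\,|\,\F_{\sigma_m}]$, noting $M_m\ge 1-\delta$ for all $m$ and $M_m\to\bone_G$ a.s.\ along the increasing $\sigma$-algebras $\F_{\sigma_m}$ (whose union generates enough of $\F_\infty$ to detect $G$, since $G$ is a tail-type event determined by the long-run behavior and the $\sigma_m$ are cofinal in time on $G^c$). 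Taking expectations gives $1-\delta\le\Prob[G]$, and since the lower bound $M_m\ge 1-\delta$ is uniform while $M_m\in\{0,1\}$ in the limit, the only consistent possibility is $\Prob[G]=1$. The verification that $\sigma_m\to\infty$ on $G^c$ and that $\bigvee_m\F_{\sigma_m}$ is rich enough is where the care is required; everything else is a standard strong-Markov bookkeeping.
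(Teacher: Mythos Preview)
Your approach is correct in spirit and can be made rigorous via the martingale-convergence route you sketch at the end, but it is more elaborate than necessary, and one intermediate claim is not right as stated.

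First, the claim $\Prob[G\,|\,X_0=x]\ge 1-\delta^{m}$ does not follow from your setup: the event $G$ is the \emph{same} tail event at every stage, so the bounds $\Prob[G^c\,|\,\F_{\sigma_m}]\le\delta$ do not multiply across $m$. What does work is exactly your ``cleanest route'': $M_m=\Prob[G\,|\,\F_{\sigma_m}]\ge 1-\delta$, and since $\sigma_{m+1}>\sigma_m$ gives $\sigma_m\ge m\to\infty$, one has $\bigvee_m\F_{\sigma_m}=\F_\infty\ni G$, so $M_m\to I_G$ a.s.; hence $I_G\ge 1-\delta>0$ a.s.\ and therefore $I_G=1$ a.s. So drop the $1-\delta^m$ line and keep the martingale argument.

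Second, the paper's proof short-circuits almost all of this. It does \emph{not} first establish $\Prob_x[\sigma<\infty]=1$, nor does it iterate through a sequence of stopping times. Instead it uses accessibility only to get the crude uniform bound
\[
\Prob_x[\mathcal C]\ \ge\ (1-\delta)\gamma\ >\ 0\qquad\mbox{for every }x\gg 0,
\]
which requires merely $\Prob_x[S<\infty]\ge\gamma$, and then applies the L\'evy zero--one law along the \emph{deterministic} filtration $(\F_t)_{t\ge 0}$: by the Markov property $\E_x[I_{\mathcal C}\,|\,\F_t]=\Prob_{X_t}[\mathcal C]\ge(1-\delta)\gamma$ for all $t$, and this converges a.s.\ to $I_{\mathcal C}$, forcing $I_{\mathcal C}=1$. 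Using deterministic times completely sidesteps the filtration-richness concern you flag as the main obstacle, and using the weaker bound $(1-\delta)\gamma$ rather than $1-\delta$ costs nothing, since any strictly positive uniform lower bound suffices for the $0$--$1$ conclusion. Your route buys the sharper intermediate bound $1-\delta$ at the price of the extra hitting-time analysis and the stopped-filtration bookkeeping; the paper trades that sharpness for a two-line finish.
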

\begin{proof}
Define the event \[\mathcal{C}=\left\{\lim_{t\to\infty} X_t=\infty \mbox{ or } \lim_{t\to\infty} X_t =0\right\}.\] For any $x\in C$, define $\Prob_x[\mathcal{E}]=\Prob[\mathcal{E}|X_0=x]$ for any event $\mathcal E$ in the $\sigma$-algebra generated by $\{X_0=x,X_1,X_2,\dots\}$. Define the stopping time 
\[
S = \inf \{t\ge 0 \ : X_t \ge c \bone \mbox{ or } X_t \le \bone/c\}.
\]  Since $\{0,\infty\}$ is accessible, there exists $\gamma >0$ such that $\Prob_{x}[S<\infty]>\gamma$ for all $x\gg 0$. Let $I_{\{ S<\infty\}}$ equal $1$ if $S<\infty$ and $0$ otherwise. The strong Markov property implies that for all $x\gg 0$
\begin{eqnarray*}
\Prob_x\left[ \mathcal{C}\right] &=& \E_{x}\left[ \Prob_{X_S} \left[ \mathcal{C} \right] I_{\{ S<\infty\}} \right]+\E_{x}\left[ \Prob_{X_S} \left[ \mathcal{C} \right] I_{\{ S=\infty\}} \right]\\
&=& \E_{x}\left[ \Prob_{X_S} \left[ \mathcal{C} \right] I_{\{ S<\infty\}} \right]\\
&\ge&(1-\delta)\gamma.
\end{eqnarray*} 
Let $\mathcal{F}_t$ be the $\sigma$-algebra generated by $\{X_1,\dots,X_t\}$.
The L\'{e}vy zero-one law implies that for all $x\gg 0$, $\lim_{t\to \infty} \E_{x}\left[ I_{\mathcal{C}} | \mathcal{F}_t\right]= I_{\mathcal{C}}$ almost surely. On the other hand, the Markov property implies that $\E_{x}\left[ I_{\mathcal{C}} | \mathcal{F}_t\right]= \Prob_{X_t}[\mathcal{C}] \ge (1-\delta)\gamma$ for all $x\gg 0$. Hence $\Prob_{x}[\mathcal{C}]=1$ for all $x\gg0$. 

\end{proof}

\bibliography{allee}

\end{document}